\documentclass[12pt]{article}
\usepackage[margin=2.5cm]{geometry}
\usepackage{amsmath,amssymb,amsthm}
\usepackage{array,tabularx}
\usepackage[colorlinks=false,pdfborder={0 0 0}]{hyperref}

\newtheorem{proposition}{Proposition}[section]
\newtheorem{corollary}[proposition]{Corollary}

\theoremstyle{definition}
\newtheorem{definition}[proposition]{Definition}
\newtheorem{hypothesis}{Hypothesis}
\theoremstyle{remark}
\newtheorem{remark}[proposition]{Remark}

\newcommand{\Lam}{\Lambda}
\newcommand{\FG}{\mathcal{F}_G}
\newcommand{\FA}{\mathcal{F}_A}

\begin{document}

\begin{center}
{\LARGE \textbf{A structural criterion for asymptotic states\\[2pt] in Supersymmetry}}\\[1.0cm]
{\large Stefano Bellucci$^{1,2,3,*}$ \quad and \quad Stefania De Matteo$^{4}$}\\[0.5cm]
{\small
$^{1}$ INFN--Laboratori Nazionali di Frascati, Frascati, Italy\\
$^{2}$ Nano Research Laboratory, Excellent Center, Baku State University, Baku, Azerbaijan\\
$^{3}$ Ecotec University, Samborond\'on, Ecuador\\
$^{4}$ Independent Researcher, Rome, Italy\\[4pt]
$^{*}$ Corresponding Author, e-mail: stefano.bellucci@lnf.infn.it
}\\[0.4cm]
August 2026
\end{center}

\begin{abstract}
\noindent
The occurrence of a field in the formal structure of a quantum field theory does not
by itself establish the existence of a corresponding asymptotic particle. We develop a
structural criterion that separates three logically distinct questions: which field
structures are admissible, which of them survive the dynamics as localized spectral
excitations, and which are realized as operational particle states. The revised
exterior-polynomial classification of gravitino mass bilinears provides the operator-level
prototype for this distinction. In the unrestricted derivative-free vector--spinor algebra,
Lorentz covariance and parity do not uniquely select the Rarita--Schwinger contraction;
in the restricted first-order exterior-polynomial sector, the admissible quadratic
four-forms span a two-dimensional real space and parity selects the conventional line.
The superspace projection extracts a component already contained in a completed
super-four-form and neither generates its Clifford structure nor fixes its coefficient.

Transposing this layered logic from operators to states, we formulate particlehood as the
endpoint of an ordered filtration: spacetime geometry and the quantum state determine the
spectral environment; the effective algebra fixes the physically realized symmetries and
admissible channels; dynamics modifies the dressed two-point function; and spectral
analysis decides whether an isolated atomic contribution survives. The localization
criterion $\Lam[\Psi]$ is thereby given measure-theoretic content: within a spectral
representation in which the criterion is applicable, $\Lam[\Psi]=1$ when the physical
spectral measure possesses an isolated atom with non-zero weight corresponding to a
stable asymptotic particle state, and $\Lam[\Psi]=0$ otherwise.

The framework also clarifies two limits of the original formulation. First, the apparent
fermion--scalar asymmetry is channel-specific rather than universal: first-order structure
suppresses a leading derivative-coupled channel but does not protect fermions against
dephasing or interaction-mediated spectral transfer. Second, no asymmetric realization
of particlehood within an exact supersymmetric multiplet is possible while the supercharges
act on the physical Hilbert space and annihilate the vacuum. Any split therefore requires
supersymmetry breaking, altered background realization, or kinematic and dynamical
differences among the relevant channels. The paper establishes necessary structural
conditions only; their quantitative realization is deferred to a companion work.
\medskip

\noindent\textbf{Keywords:} Quantum Field Theory; Supersymmetry; Asymptotic states;
Localization criterion; Spectral measure; K\"all\'en--Lehmann representation;
Gravitino; Exterior-polynomial classification; Infraparticles
\end{abstract}

\section{Introduction}\label{sec:intro}

In relativistic quantum field theory, a field and a particle are not interchangeable
notions. Fields enter the theory as operator-valued distributions, whereas the particle
interpretation requires additional spectral and asymptotic conditions. In the LSZ
framework, a stable one-particle state is associated with an isolated mass-shell
contribution of non-zero residue. Haag's theorem, algebraic quantum field theory, and the
infraparticle problem all show that formal field content need not map one-to-one onto
asymptotic particle content \cite{LSZ,Haag1955,Haag1996,Buchholz1986,Buchholz1982,
Kibble1968,KulishFaddeev,Froehlich1974}.

Supersymmetry makes this distinction especially sharp. Its algebra relates bosonic and
fermionic fields and organizes them into multiplets \cite{WessZumino,WeinbergIII,
FayetFerrara,Nilles,WessBagger}. Yet the algebraic presence of a component does not by
itself establish that the corresponding excitation possesses an isolated pole, survives
infrared dressing, or can be prepared and detected as an asymptotic particle. In
supergravity the issue is more delicate because local supersymmetry is a gauge symmetry
and the physical state space depends on constraints, gauge invariance, and the chosen
background \cite{FerraraFreedmanVN,FreedmanVanProeyen,vanNieuwenhuizen}.

The original version of this work introduced a binary localization criterion
$\Lam[\Psi]$ that distinguished algebraically admissible fields from configurations capable
of forming stable, finite-energy, phase-coherent asymptotic states. It also proposed a
qualitative asymmetry between fermionic and scalar responses to slowly varying structural
fluctuations. The revised analysis retains the basic distinction but narrows and clarifies
its claims. It does not treat first-order fermionic structure as a universal protection,
does not identify ensemble dephasing with an intrinsic decay width, and does not infer a
Standard-Model--superpartner split from geometry alone.

The revision is also made congruent with the replacement of arXiv:2601.12537 v1 by the
new exterior-polynomial classification of the gravitino mass sector \cite{Paper1}. The
earlier reduced $\mathbb C^{0|1}$ Berezin-projection mechanism is not retained. A scalar
odd coordinate cannot generate the vector--spinor indices or Clifford contraction of the
Rarita--Schwinger operator, and $\theta d\theta$ is not the required integral form. The
corrected result distinguishes: (i) classification of admissible structures in a specified
algebra; (ii) physical selection by the kinetic and constraint structure together with
local supersymmetry; and (iii) determination of the coefficient by the background,
supersymmetry breaking, and matter couplings. The present paper transposes this same
layering from operators to states.

The central claim is therefore limited and structural. Particlehood is the endpoint of an
ordered sequence:
\[
\text{geometry and state}
\longrightarrow
\text{effective algebra}
\longrightarrow
\text{dynamics}
\longrightarrow
\text{spectral realization}
\longrightarrow
\text{particlehood}.
\]
The first stages determine what can be consistently posed; the dynamical stage determines
the dressed propagator; the spectral stage determines whether an isolated atom survives.
No stage alone is sufficient.

The paper proceeds as follows. Section~\ref{sec:bridge} explains the correspondence between
operator admissibility in the revised gravitino classification and state admissibility.
Section~\ref{sec:filtration} introduces the reduced filtration. Section~\ref{sec:order}
clarifies its order dependence on de~Sitter backgrounds. Section~\ref{sec:spectral} gives
the localization criterion a measure-theoretic formulation. Section~\ref{sec:channels}
distinguishes direct from mediated spectral susceptibility. Section~\ref{sec:split}
establishes the conditions under which an asymmetric realized spectrum is structurally
possible. The discussion states the limits of the framework and the division of labour
with the companion dynamical paper.

\section{From operator admissibility to state admissibility}\label{sec:bridge}

The revised gravitino analysis \cite{Paper1} separates two classification problems that
were conflated in arXiv:2601.12537 v1. In the unrestricted derivative-free vector--spinor
algebra, inverse coframes, contractions of the vector indices, interior products, and
Hodge duals are available. The parity-even sector therefore contains independent
structures such as $\bar\psi_a\psi^a$ and
$(\bar\psi_a\gamma^a)(\gamma^b\psi_b)$, with
\[
\bar\psi_a\gamma^{ab}\psi_b
=
(\bar\psi_a\gamma^a)(\gamma^b\psi_b)-\bar\psi_a\psi^a.
\]
Consequently, Lorentz covariance, algebraicity, and parity do not uniquely determine the
Rarita--Schwinger contraction in the full algebra.

A different statement holds in the restricted first-order Cartan sector generated
polynomially by the coframe $e^a$, the Majorana vector--spinor one-form $\psi$, its
conjugate, gamma matrices, and wedge multiplication, while excluding inverse coframes,
interior products, and Hodge duals acting on $\psi$. In this exterior-polynomial sector,
the Lorentz-equivariant quadratic four-forms span a two-dimensional real space. Parity
selects the conventional mass line. This is a conditional classification theorem: its
scope is fixed by the stated algebra and should not be enlarged to the unrestricted
vector--spinor problem.

The classification does not by itself select the complete physical theory. The
Rarita--Schwinger combination realized in supergravity is tied to the kinetic operator,
constraint propagation, and local supersymmetry. Its coefficient is fixed separately by
the background and supersymmetry-breaking data, for example by the AdS scale or the
K\"ahler-covariant mass function in matter-coupled $\mathcal N=1$ supergravity. Thus the
revised operator-level result has three layers:
\begin{equation}
\text{structural admissibility}
\longrightarrow
\text{dynamical and symmetry selection}
\longrightarrow
\text{coefficient determination}.
\label{eq:operatorlayers}
\end{equation}

The superspace correction is equally important. A picture-changing operator or pullback
to the bosonic body extracts the relevant component of a completed super-four-form. It
does not manufacture the one-form index of the gravitino, create the Clifford structure,
or introduce a mass scale. Equivalence between different projection representatives
requires the usual hypotheses of closedness, appropriate cohomology class, global
definition or patching, and controlled boundary terms. The revised construction therefore
supersedes, rather than merely rephrases, the reduced $\mathbb C^{0|1}$ argument.

The present paper uses only the logical architecture of this result. At the state level,
one must again separate what is admissible, what is dynamically selected, and what is
physically realized. The operator classification does not imply a pole theorem and applies
to a Majorana vector--spinor independently of whether it belongs to an observable
supermultiplet. Its role here is to provide a precise prototype of a broader rule:
classification at one structural level cannot be promoted automatically to existence at a
later physical level.

\begin{proposition}[Operator--state correspondence]\label{prop:correspondence}
The revised operator classification and the present state criterion instantiate the same
logical pattern but are not derivations of one another. At the operator level, the chosen
geometric algebra restricts the admissible bilinears before symmetry and dynamics select
the physical term. At the state level, geometry and the quantum state restrict the spectral
environment before dynamics determines whether an isolated particle contribution
survives.
\end{proposition}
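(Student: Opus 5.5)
The plan is to read Proposition~\ref{prop:correspondence} as two claims and prove each separately: a \emph{positive} claim that both levels share a common layered pattern, and a \emph{negative} claim that neither level is derived from the other. For the positive claim I will define the pattern abstractly as a triple (admissible class $\mathcal A$, selection map $\sigma:\mathcal A\to\mathcal S\subseteq\mathcal A$, realization datum $\rho$ on $\mathcal S$). The argument then consists in exhibiting each level as an instance of this triple.

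At the operator level I take:
\begin{itemize}
\item $\mathcal A$ to be the two-dimensional real space of Lorentz-equivariant quadratic four-forms in the restricted exterior-polynomial sector;
\item $\sigma$ to be the composite of parity and the kinetic/constraint/local-supersymmetry requirement, which picks out the conventional line;
\item $\rho$ to be the coefficient fixed by background and breaking data, as in \eqref{eq:operatorlayers}.
\end{itemize}
At the state level I take:
\begin{itemize}
\item $\mathcal A$ to be the class of spectral environments compatible with the background geometry and the chosen quantum state;
\item $\sigma$ to be the effective algebra together with the dynamics, which produces a dressed two-point function;
\item $\rho$ to be the atomic/continuous decomposition of its spectral measure, deciding particlehood.
\end{itemize}
Checking that each assignment has the stated form is only a matter of reading off Section~\ref{sec:bridge} and the introduction's filtration. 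The one structural property to verify is the ordering: in both cases $\sigma$ acts only on the output of the first stage.

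For the negative claim I will argue by independence of inputs and outputs. First, the operator classification is a statement about a polynomial algebra of forms. It makes no reference to a Hilbert space, a vacuum, or a spectral measure, and as noted above it applies to a Majorana vector--spinor whether or not it lies in an observable multiplet. Hence no pole or atom statement can follow from it. I will make this concrete by noting that the same admissible mass line is compatible both with an isolated atom and with a purely continuous spectrum, for example under infrared dressing. Conversely, the state criterion takes a two-point function as input, and many distinct operator-level algebras, including the unrestricted one, yield the same spectral data. So the spectral criterion cannot recover the two-dimensional classification. Together these two observations establish the mutual non-derivability.

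The main obstacle is that ``same logical pattern'' is informal. The proof risks either trivializing it, since any two-step process fits some triple, or overclaiming it by suggesting a functor between the levels. I will handle this by asserting only the ordered-restriction property, namely that restriction precedes selection, which precedes realization, and that each later stage cannot be promoted from an earlier one. I will not construct any map between the two instances. The independence examples then serve as the essential content, showing that the correspondence is an analogy of structure rather than an implication.
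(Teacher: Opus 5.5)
Your proposal is correct and follows the paper's own justification, which is the discussion in Section~\ref{sec:bridge} leading up to the proposition (no separate proof is given). There the layering \eqref{eq:operatorlayers} is set against the state-level separation of admissibility, dynamical selection and realization, and non-derivability rests on the operator classification implying no pole theorem and applying to a Majorana vector--spinor whether or not it lies in an observable multiplet.

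Your converse direction (the restricted-versus-unrestricted choice of algebra is invisible to the spectral data) and your explicit witness go a little beyond what the paper spells out, and both are sound. Two small adjustments are needed. First, at the state level let $\sigma$ map into a separate target rather than into $\mathcal S\subseteq\mathcal A$, since Section~\ref{sec:filtration} stresses that the stages after the effective algebra are calculations, not sieves on a common space. Second, an unstable gravitino above a decay threshold (Proposition~\ref{prop:threshold}) is a cleaner non-atomic witness than infrared dressing, because a Majorana field carries no $U(1)$ charge.
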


\section{A reduced particlehood filtration}\label{sec:filtration}

Let $\mathfrak X_0$ denote the formal field content and writable interaction structures of a
theory. We distinguish four stages:
\begin{equation}
\mathfrak X_0
\supseteq
\mathfrak X_G
\supseteq
\mathfrak X_{GA}
\longrightarrow
\mathfrak X_{\mathrm{dyn}}
\longrightarrow
\mathfrak X_{\mathrm{spec}}
\longrightarrow
\mathfrak X_{\mathrm{part}}.
\label{eq:chain}
\end{equation}
The first two restrictions specify the background-dependent domain of the physical
problem. The arrows that follow are not additional abstract sieves: they denote the
dynamical calculation, its spectral diagnosis, and the assignment of particlehood.

\subsection{Geometry and quantum state}

For a background metric $g_{\mu\nu}$ and a physically admissible state $\omega$, the map
\[
\FG[g,\omega]:\mathfrak X_0\longrightarrow\mathfrak X_G(g,\omega)
\]
fixes the causal and horizon structure, the available mode decomposition, the admissible
notion of positive frequency, the infrared behaviour of correlators, and the domain in
which a global or local particle criterion is meaningful. The geometry and state therefore
define the spectral arena; they do not by themselves decide the fate of every pole.

\subsection{Effective algebra}

The second restriction,
\[
\FA[\cdot;\FG]:\mathfrak X_G\longrightarrow\mathfrak X_{GA},
\]
contains the symmetries and constraints physically realized on the state space selected by
$\FG$: Lorentz and gauge representations, BRST cohomology where applicable, physical
polarizations, conserved charges, and the admissible interaction vertices. This is an
effective algebraic structure, not merely the abstract symmetry algebra of the classical
Lagrangian.

\subsection{Dynamics, spectral decision, and particlehood}

Only after these restrictions are specified does one compute the dressed correlation
functions. Kinematic thresholds enter as part of this dynamical calculation, not as an
independent ontological level. Spectral analysis then asks whether the dressed two-point
function contains an isolated atomic contribution of non-zero weight. In the present
paper this is a criterion for a \emph{stable asymptotic particle state}, not for an
unstable resonance that may nevertheless be experimentally observable. In non-stationary
settings the same language is used only within a regime in which a local or adiabatic
spectral construction is justified.

\begin{hypothesis}[Ordered particlehood filtration]\label{hyp:filtration}
For a fixed theory, background, and state prescription, the standard particle
interpretation is the endpoint of the ordered sequence
\[
\text{geometry/state}
\to
\text{effective algebra}
\to
\text{dynamics}
\to
\text{spectrum}
\to
\text{particlehood}.
\]
The ordering expresses dependence of the downstream problem on the upstream data; it is
not a claim that all stages are linear operators acting on a common space.
\end{hypothesis}

The original binary criterion is recovered as a composite diagnostic:
$\Lam[\Psi]=1$ if the excitation survives the complete sequence and possesses an isolated
physical spectral atom; otherwise $\Lam[\Psi]=0$.
\section{Order dependence of the two primary sieves}\label{sec:order}

The formal field algebra can be written before a background is chosen, but the effective physical algebra can only be evaluated afterwards. The relevant claim is therefore hierarchical order dependence, not a commutator statement about two endomorphisms acting on a common space: the reverse composition need not be defined on the same domain.

\begin{proposition}[Order dependence of the primary sieves]\label{prop:order}
Let $\FG^{\mathrm{dS}}$ denote the restriction associated with a four-dimensional de~Sitter spacetime and a physically admissible state, and let $\FG^{\mathrm{M}}$ denote the restriction associated with Minkowski spacetime and its vacuum. Define the effective symmetry content of $\FA[\,\cdot\,;\FG]$ as the set of symmetries realized linearly and unitarily, with the appropriate positivity properties, on the state space produced by $\FG$. For a theory whose Minkowski limit is $\mathcal{N}=1$ supersymmetric, $\FA[\,\cdot\,;\FG^{\mathrm{M}}]$ may contain linearly realized supersymmetry, whereas $\FA[\,\cdot\,;\FG^{\mathrm{dS}}]$ does not under the same unitarity and positivity requirements. Therefore the effective algebraic sieve is not definable independently of the geometric sieve, and the construction is order-dependent.
\end{proposition}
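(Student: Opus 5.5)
The plan is to reduce the statement to a known no-go result: there is no unitary, positive-energy realization of $\mathcal N=1$ supersymmetry on four-dimensional de~Sitter space. The order dependence then follows because $\FA$ is defined as the symmetry content realized on the output of $\FG$. I would argue in three steps.

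\emph{Minkowski side.} I would take $\FG^{\mathrm M}$ to produce the Fock/Wightman space of the Poincar\'e vacuum. On that space the super-Poincar\'e algebra can be realized linearly and unitarily by supercharges $Q_\alpha$ obeying
\[
\{Q_\alpha,\bar Q_{\dot\beta}\}=2\sigma^\mu_{\alpha\dot\beta}P_\mu ,
\]
with $Q_\alpha$ annihilating the vacuum. Positivity of $P^0$ is then consistent with, and in fact implied by, this anticommutator. Here I would simply cite the standard realization, e.g.\ the free Wess--Zumino model. Since the statement only says ``may contain'', one example suffices.

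\emph{de~Sitter side.} Any superalgebra whose even part contains the dS isometry algebra $\mathfrak{so}(4,1)$ and which closes on spinorial odd generators must be a real form of $\mathfrak{osp}(1|4)$. The real form with even part $\mathfrak{so}(3,2)$ is the anti-de~Sitter superalgebra. I would invoke the Nahm/Pilch--van~Nieuwenhuizen--Sohnius classification to show that no real form with even part $\mathfrak{so}(4,1)$ admits a unitary representation in which the anticommutator of supercharges with their adjoints is positive semidefinite. The route is direct: assume a $*$-structure with $Q^\dagger$ given by the Majorana conjugate, and compute $\{Q,Q^\dagger\}$. It yields a combination of $\mathfrak{so}(4,1)$ generators that is not bounded below in any unitary irreducible representation, because $\mathfrak{so}(4,1)$ has no positive-energy generator (no compact generator playing the role of the AdS energy). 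Equivalently, the reality condition forces $\{Q,Q^\dagger\}$ to involve noncompact boosts whose spectrum is all of $\mathbb R$. This contradicts positivity, and $\mathcal N=1$ already fails.

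\emph{Conclusion.} Given the definition of effective symmetry content in the statement, supersymmetry lies in $\FA[\cdot;\FG^{\mathrm M}]$ but not in $\FA[\cdot;\FG^{\mathrm{dS}}]$. The same abstract Lagrangian symmetry therefore yields different effective algebras depending on the geometric input. Hence $\FA$ cannot be specified before $\FG$, and the reverse composition is not even well defined on the same domain, as discussed before Proposition~\ref{prop:order}.

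The main obstacle is the de~Sitter step. I would have to make sure the no-go covers every admissible $*$-structure. In particular, the $\mathcal N=2$ and non-minimal variants can evade it only by sacrificing unitarity, e.g.\ with ghost-like states or pseudo-Hermitian charges, and these are excluded by hypothesis. I would also need to handle the fact that a dS-invariant state such as Bunch--Davies does not furnish a global positive-energy Hilbert space. To settle this I would check the signature of the bilinear form $\{Q_\alpha,(Q_\beta)^\dagger\}$ directly in the $\mathfrak{osp}(1|4)$ root decomposition, and rely on the literature result for the full proof.
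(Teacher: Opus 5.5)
Your overall route is the paper's. The paper's argument also reduces the de~Sitter side to the Pilch--van~Nieuwenhuizen--Sohnius classification, adds the remark that de~Sitter supergravity realizes supersymmetry only non-linearly through constrained Volkov--Akulov multiplets, and then reads off order dependence from the definition of $\FA$, exactly as in your conclusion. Your Wess--Zumino example just makes the ``may contain'' clause concrete.

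The genuine problem is the self-contained de~Sitter computation you plan, which rests on a false premise. No real form of $\mathfrak{osp}(1|4)$ has even part $\mathfrak{so}(4,1)$. The only real form is $\mathfrak{osp}(1|4;\mathbb R)$, whose even part is $\mathfrak{sp}(4,\mathbb R)\cong\mathfrak{so}(3,2)$. The reason is that $\mathfrak{so}(4,1)\cong\mathfrak{sp}(1,1)$ is the Lorentz algebra of five-dimensional Minkowski space. Its four-component spinor is quaternionic, so it admits no Majorana condition, only symplectic-Majorana conditions on an even number of spinors. Consequences for your plan:
\begin{itemize}
\item There is no real $\mathfrak{so}(4,1)$-module of four real supercharges.
\item The step ``take $Q^\dagger$ to be the Majorana conjugate and compute $\{Q,Q^\dagger\}$'' cannot be set up.
\item A signature check in the $\mathfrak{osp}(1|4)$ root decomposition can only ever return the AdS case.
\end{itemize}
For $\mathcal N=1$ the correct statement is stronger and simpler than the one you aim at. No de~Sitter superalgebra with four real supercharges exists at all, unitary or not. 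Linearly realized $\mathcal N=1$ supersymmetry is therefore excluded before any positivity argument is needed.

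Your positivity heuristic is relevant only to the extended, symplectic-Majorana case, and it needs sharpening there. First, $\mathfrak{so}(4,1)$ does contain compact generators, namely those of $\mathfrak{so}(4)$. The relevant property is that $\mathfrak{so}(4,1)$ is not of Hermitian type, since the centre of $\mathfrak{so}(4)$ is trivial. Hence it has no closed $\mathrm{Ad}$-invariant convex cone other than $\{0\}$ and itself, and no non-zero generator is positive in a non-trivial unitary representation. Second, for $\mathcal N\ge 2$ the anticommutator $\{Q,Q^\dagger\}$ also contains internal generators that commute with $\mathfrak{so}(4,1)$. So ``not bounded below'' does not follow from the $\mathfrak{so}(4,1)$ component alone. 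You need either the invariant-cone version of the argument or, as the paper does, the cited classification.

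With the $\mathcal N=1$ step replaced by the non-existence of Majorana supercharges for $\mathfrak{so}(4,1)$, and the extended case left to the literature, your proof reduces to the paper's.
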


\begin{proof}[Argument]
Under the standard assumptions relevant here---a physical Hilbert-space
realization with positive norm and the usual Hermiticity/positivity properties of the
supercharges---four-dimensional de~Sitter symmetry does not admit the conventional
linearly realized $\mathcal{N}=1$ supersymmetry familiar from Minkowski space. The
classification of de~Sitter superalgebras \cite{PilchvNS} exhibits the associated
non-compact internal-symmetry obstruction, while explicit de~Sitter supergravity
constructions can instead realize supersymmetry non-linearly through constrained
Volkov--Akulov multiplets \cite{VolkovAkulov,BFKVP}; see also Ref.~\cite{WittendS}.
Thus the symmetry content entering the effective algebra depends on the background and
on the class of physical realizations admitted. This is the only conclusion required
for the order-dependence statement. \qedhere
\end{proof}

\begin{remark}[Order dependence and non-exchangeability]\label{rem:nonexchange}
Proposition~\ref{prop:order} does not require the formal construction of $\FG\circ\FA$. It states that $\FA$ is conditioned by the output of $\FG$ and that reversing the conceptual order changes, or leaves undefined, the physical construction. The primary sieves are therefore non-exchangeable without any assertion that they form two operators on a common domain whose commutator can be evaluated.
\end{remark}

\begin{remark}[No double counting]\label{rem:doublecounting}
Proposition~\ref{prop:order} also resolves the potential overlap between the two sieves. Structures that could be attributed to either---the notion of positive frequency, the infrared sector, background-induced symmetry breaking---are assigned by convention as follows: the state and its microlocal properties belong to $\FG$; algebraic relations, constraints, cohomology, and the inventory of \emph{realized} symmetries belong to $\FA$, evaluated on the output of $\FG$. Since $\FA$ is defined relative to $\FG$, no structural effect is counted twice.
\end{remark}

\section{Spectral realizability and the localization criterion}\label{sec:spectral}

\subsection{The spectral measure of the dressed two-point function}

In a state in which a K\"all\'en--Lehmann-type representation is available \cite{Kallen,Lehmann}, the dressed two-point function of a field $\Psi$ defines a positive spectral measure $d\rho_\Psi$ on $[0,\infty)$,
\begin{equation}\label{eq:KL}
\Delta_\Psi(p^2) \;=\; \int_0^\infty \frac{d\rho_\Psi(s)}{p^2 - s + i\epsilon}\,,
\end{equation}
with the appropriate tensor or spinor structure factored out. A global K\"all\'en--Lehmann representation of this form relies on the
appropriate stationarity and spectral assumptions and is not available on a generic
cosmological spacetime. In a slowly varying background we use \eqref{eq:KL} only when a
controlled local or adiabatic spectral construction is available \cite{BirrellDavies}.
Where no such construction exists, Definition~\ref{def:classes} is not asserted in this
form. This limitation is part of the state and background dependence encoded by $\FG$.

\begin{definition}[Spectral classes]\label{def:classes}
Decompose the spectral measure into its atomic, absolutely continuous, and singular-continuous parts,
\begin{equation}\label{eq:lebesgue}
d\rho_\Psi(s) \;=\; Z\,\delta_{m^2}(ds) \;+\; \tilde\rho(s)\,ds \;+\; d\rho_{\mathrm{sc}}(s)\,.
\end{equation}
The localization criterion is
\begin{equation}\label{eq:Lambda}
\Lam[\Psi] \;=\;
\begin{cases}
1 & \text{if } d\rho_\Psi \text{ possesses an isolated atom with } Z>0
    \text{ defining a stable asymptotic state,}\\[2pt]
0 & \text{otherwise.}
\end{cases}
\end{equation}
\end{definition}

The possible outcomes of the filtration are then classified as follows: an isolated atom with $Z>0$ (stable asymptotic particle state); an atom embedded in, or adjacent to, continuum support, generically converted into a broadened resonance when the relevant on-shell coupling, spectral density, and phase-space factor are nonzero (Section~\ref{sec:channels}); a power-law singularity without an atom (infraparticle \cite{Buchholz1986,Froehlich1974}); a gauge-variant excitation with no physical pole (confined or unphysical sector); and a purely continuous measure (non-asymptotic continuum). Definition~\ref{def:classes} gives the binary criterion of the first version its precise content while preserving it as a compact indicator.

\subsection{Dephasing versus dissipation}\label{sec:dephasing}

A structural clarification is in order concerning stochastic modelling. When correlators are averaged over an ensemble of classical background configurations, the averaged amplitude generically decays even though each realization evolves unitarily; this is inhomogeneous broadening (dephasing) of the \emph{averaged} correlator, not an intrinsic imaginary self-energy of a fixed theory. The criterion \eqref{eq:Lambda} refers to the spectral measure of the theory in a fixed state. Stochastic background models, including the one employed in the first version of this work and retained in Section~\ref{sec:channels}, are therefore proxies: their damping rates estimate genuine self-energy effects only under an open-system, in-in treatment in which the environment is integrated out with its own dynamics \cite{BreuerPetruccione,CalzettaHu,Berges}. This distinction does not invalidate the structural comparison below, but it fixes its logical status.

\section{Direct and mediated spectral susceptibility}\label{sec:channels}

We now distinguish two ways in which the spectral measure of a field can be degraded: \emph{direct} susceptibility, through the field's own coupling to the structural environment, and \emph{mediated} susceptibility, through interactions with fields whose spectral measure has already been degraded.

\subsection{Direct channel: the structural background model revisited}\label{sec:direct}

As in the first version, we employ an effective background field $\Xi(x)$ representing slowly varying structural fluctuations of the environment---a parametric proxy for infrared and environmental effects already known in quantum field theory, carrying no conserved charges and representing no new force or fundamental medium \cite{Burgess,BreuerPetruccione,CalzettaHu}. We take $\Xi$ to vary on scales much larger than the Compton wavelengths of the fields considered.

For a Dirac fermion the effective equation is
\begin{equation}\label{eq:dirac}
\left(i\gamma^\mu\partial_\mu - m - g_f\,\Xi\right)\psi = 0\,,
\end{equation}
and for a scalar
\begin{equation}\label{eq:scalar}
\left(\Box + m^2 + g_s\,\Xi^2 + \kappa\,(\partial_\mu\Xi)^2\right)\phi = 0\,.
\end{equation}
From the effective-field-theory viewpoint, the operator $(\partial_\mu\Xi)^2\phi^\dagger\phi$ has mass dimension six, with $\kappa = c_\kappa/M_*^2$ for a structural suppression scale $M_*$. The corresponding fermionic operator $(\partial_\mu\Xi)^2\bar\psi\psi$ has dimension seven, suppressed by an additional power of $M_*$ and tied to the chirality-breaking structure $\bar\psi\psi$. At leading order in $1/M_*$, the derivative-noise channel therefore acts on scalars only. Writing $\Xi(t)=\Xi_0+\delta\Xi(t)$ with a stationary random process of variance $\sigma_\Xi^2$ and correlation time $\tau_c$, a short-correlated Gaussian-noise treatment gives a model-dependent dephasing law of
the schematic form
\begin{equation}\label{eq:gamma_scalar}
\langle\phi(t)\rangle \sim e^{-imt}\,e^{-\Gamma_s t}\,,
\qquad
\Gamma_s \propto \frac{\kappa^2\,\sigma_\Xi^2\,\tau_c}{m^2}\,,
\end{equation}
where the numerical coefficient depends on the normalization and correlator assigned to
$\Xi$. Equation~\eqref{eq:gamma_scalar} is used only as an illustrative scaling relation,
not as an intrinsic decay width of the quantum field.

The refinement concerns the fermion. The linear coupling in \eqref{eq:dirac} converts fluctuations of $\Xi$ into fluctuations of the fermion mass, $\delta m(t) = g_f\,\delta\Xi(t)$, and the ensemble-averaged fermionic amplitude then has the corresponding
model-dependent scaling
\begin{equation}\label{eq:gamma_fermion}
\Gamma_f \;\propto\; g_f^2\,\sigma_\Xi^2\,\tau_c\,,
\end{equation}
with the proportionality factor fixed by the chosen noise correlator. In the adiabatic regime each realization retains a sharp instantaneous pole; \eqref{eq:gamma_fermion} is inhomogeneous broadening in the sense of Section~\ref{sec:dephasing}. The structural asymmetry between \eqref{eq:gamma_scalar} and \eqref{eq:gamma_fermion} is therefore one of operator dimension and channel, not an absolute protection.

\begin{proposition}[Channel-specific fermionic protection]\label{prop:protection}
The first-order structure of the fermionic equation of motion suppresses the leading direct derivative-coupled damping channel, which for scalars arises at operator dimension six. It does not prevent ensemble dephasing induced by mass noise, and it does not protect the dressed fermionic residue from mediated spectral transfer (Proposition~\ref{prop:transfer}). The statement $\Lam[\psi]=1$ of the first version of this work is accordingly to be read as a statement about the leading direct channel of the adiabatic fermion, not as a universal protection of fermionic particlehood.
\end{proposition}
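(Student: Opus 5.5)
The proposition makes three separate claims, and I would prove each in turn from material already in Section~\ref{sec:direct}. The first claim is suppression of the leading derivative channel. I would power-count in an effective field theory with background $\Xi$ of mass dimension one. The scalar kinetic term $\partial\phi^\dagger\partial\phi$ fixes $[\phi]=1$. So $(\partial_\mu\Xi)^2\phi^\dagger\phi$ has dimension $4+2=6$ and comes with $\kappa=c_\kappa/M_*^2$. The Dirac kinetic term fixes $[\psi]=3/2$, so $[\bar\psi\psi]=3$ and $(\partial_\mu\Xi)^2\bar\psi\psi$ has dimension $7$, with coefficient $O(M_*^{-3})$. The first-order structure also matters in a second way. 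Any fermionic bilinear coupled to $(\partial\Xi)^2$ must be a Lorentz scalar built from $\bar\psi\Gamma\psi$. The lowest such bilinear is $\bar\psi\psi$, which flips chirality and so carries an additional factor of $m$ or of a Yukawa/chirality-breaking spurion. At order $1/M_*^2$ the only derivative-coupled operator at dimension $\le 6$ is therefore the scalar one. Substituting $\kappa^{(f)}\sim 1/M_*^3$ into the Gaussian-noise estimate \eqref{eq:gamma_scalar} gives a fermionic contribution suppressed by $(m/M_*)^2$ relative to $\Gamma_s$, or more strongly. This establishes the first clause.

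The second claim is that ensemble dephasing is not prevented. Here I would exhibit the surviving channel explicitly. The renormalizable coupling $g_f\Xi\bar\psi\psi$ in \eqref{eq:dirac} is allowed by the same power counting and is not removed by first-order structure. Treating $\delta m(t)=g_f\delta\Xi(t)$ adiabatically, each realization has phase $\exp(-i\int^t(m+\delta m)\,dt')$. Averaging over a stationary Gaussian process with variance $\sigma_\Xi^2$ and correlation time $\tau_c$ gives, by the cumulant expansion in the regime $t\gg\tau_c$, the factor $\exp(-g_f^2\sigma_\Xi^2\tau_c t)$. This is \eqref{eq:gamma_fermion}, which is generically nonzero whenever $g_f\neq0$. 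By Section~\ref{sec:dephasing} this is dephasing of the averaged correlator and not a change of the fixed-state measure, but it is exactly the effect the clause says is not prevented.

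The third claim is that mediated transfer is not prevented, and I would cite Proposition~\ref{prop:transfer}. If the paper's ordering forbids forward citation, I would sketch the argument directly. Give $\psi$ a Yukawa coupling to a field $\chi$ whose spectral measure has been degraded, either broadened or carrying continuum support below threshold. The one-loop self-energy $\Sigma_\psi$ then acquires an imaginary part on the would-be mass shell whenever the phase space is open. This moves the atom into a resonance and gives $Z\to0$ in \eqref{eq:lebesgue}. Nothing in the first-order Dirac structure constrains $\mathrm{Im}\,\Sigma_\psi$. The final interpretive sentence of the proposition then follows: $\Lam[\psi]=1$ holds only conditional on switching off the mediated and mass-noise channels, that is, it is a statement about the leading direct derivative channel.

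I expect the main obstacle to be making the first claim sharp. Power counting alone shows suppression, not absence. One must argue that no dimension-$\le6$ fermionic operator built from $\partial\Xi$, such as $\partial_\mu\Xi\,\bar\psi\gamma^\mu\psi$, contributes to the damping. My first attempt would be to note that $\partial_\mu\Xi\,\bar\psi\gamma^\mu\psi$ is dimension five. Integrating it by parts turns it into $\Xi\,\partial_\mu J^\mu$, which vanishes on-shell for a conserved vector current. Hence it can be removed by a field redefinition, namely a local phase rotation. It therefore gives no dephasing of gauge-invariant correlators, so the leading physical derivative-coupled channel is indeed pushed to dimension seven.
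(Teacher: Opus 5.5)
Your three-part structure is the paper's own support for this proposition. It consists of the comparison of $(\partial_\mu\Xi)^2\phi^\dagger\phi$ (dimension six) with $(\partial_\mu\Xi)^2\bar\psi\psi$ (dimension seven), the $g_f$ mass-noise law \eqref{eq:gamma_fermion}, and the appeal to Proposition~\ref{prop:transfer}. Your cumulant computation is a correct explicit derivation of \eqref{eq:gamma_fermion}.

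The gap is in the sharpening you add at the end. You rightly note that the dimension six/seven comparison does not exclude lower-dimension current couplings; the paper never considers them either. But your resolution covers only the vector current. The axial operator $M_*^{-1}\partial_\mu\Xi\,\bar\psi\gamma^\mu\gamma^5\psi$ is also of dimension five. Its current is not conserved for $m\neq0$, since $\partial_\mu(\bar\psi\gamma^\mu\gamma^5\psi)=2im\bar\psi\gamma^5\psi$. An axial rotation removes it only by turning the mass into $m\,e^{\pm 2i\gamma^5\Xi/M_*}$. Concretely, take homogeneous $\Xi(t)$ in the adiabatic regime and set $b_0=\dot\Xi/M_*$. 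At rest the Dirac equation gives $E\psi=(m\gamma^0+b_0\gamma^5)\psi$, and since $\gamma^0$ and $\gamma^5$ anticommute,
\[
E=\sqrt{m^2+\dot\Xi^2/M_*^2}\simeq m+\frac{\dot\Xi^2}{2mM_*^2}.
\]
This is exactly the scalar shift $\kappa\,\dot\Xi^2/(2m)$ with $\kappa\sim M_*^{-2}$. So the conclusion that the leading derivative channel is ``pushed to dimension seven'' does not follow. You need an extra hypothesis forbidding this operator, for instance parity with $\Xi$ a true scalar, as the coupling $g_f\Xi\bar\psi\psi$ suggests. The point matters in the supersymmetric setting: for Majorana fermions the vector current vanishes identically, so the axial current is the only one available.

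Your vector-current argument also needs a different justification. There is no gauge symmetry here and $\psi$ is charged, so its two-point function does pick up $e^{-i(\Xi(x)-\Xi(y))/M_*}$. What saves you is that this is a boundary term: for a stationary Gaussian $\Xi$ its average saturates at $e^{-\sigma_\Xi^2/M_*^2}$ rather than decaying. The fact that this phase rotation produces no seagull term for the first-order Dirac Lagrangian, whereas it does for a scalar, is a genuine instance of the protection the first clause claims.

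Two smaller corrections. First, in the mediated-transfer sketch you say an open channel ``moves the atom into a resonance and gives $Z\to0$''. That is stronger than the paper allows. Corollary~\ref{cor:transfer} states that no degradation follows from linearity alone. Proposition~\ref{prop:threshold} additionally requires non-vanishing on-shell matrix element, transferred spectral density and phase-space factor, and lists symmetry-protected exceptions. The clause only needs that the first-order structure places no constraint on $\operatorname{Im}\Sigma_\psi$, which you also state, so drop the stronger claim.

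Second, your closing reading makes $\Lam[\psi]=1$ conditional on switching off the mass-noise channel. That contradicts your own correct remark that this channel is inhomogeneous broadening of the averaged correlator. In the adiabatic regime each realization keeps a sharp instantaneous pole, so mass noise does not by itself alter the fixed-state measure or $\Lam$. This is why the paper phrases the conclusion in terms of the adiabatic fermion.
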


\begin{table}[t]
\centering
\begin{tabular}{lcc}
\hline
Property & Fermion & Scalar \\
\hline
Order of equation of motion & First & Second \\
Leading direct noise operator & dimension 7, chirality-suppressed & dimension 6 \\
Direct derivative-channel damping & Suppressed & Present \\
Ensemble dephasing from mass noise & Present & Present \\
Mediated susceptibility & Present & Present \\
Leading direct derivative channel & Suppressed & Present \\
\hline
\end{tabular}
\caption{Structural comparison between fermionic and scalar fields, refining Table~1
of the first version. The table compares channels in the illustrative background model;
it does not by itself assign the fixed-state spectral criterion $\Lam$. Mediated
susceptibility (Section~\ref{sec:mediated}) can affect both sectors.}
\label{tab:comparison}
\end{table}

\subsection{Mediated channel: spectral transfer}\label{sec:mediated}

\begin{proposition}[Spectral transfer decomposition]\label{prop:transfer}
Consider a one-loop self-energy $\Sigma_\Psi(p)$ in which a single internal line carries
the dressed propagator of a mediator field $S$ with spectral measure
\[
d\rho_S(s)=Z_S\,\delta_{m_S^2}(ds)+d\rho_{S,\mathrm{cont}}(s),
\]
where $d\rho_{S,\mathrm{cont}}$ denotes the non-atomic part relevant to the channel.
Whenever the interchange of the spectral and loop integrations is justified, linearity
gives
\begin{equation}\label{eq:transfer}
\Sigma_\Psi(p)
=
Z_S\,\Sigma^{(0)}_\Psi(p;m_S^2)
+
\int_{s_0}^{\infty}
\Sigma^{(0)}_\Psi(p;s)\,d\rho_{S,\mathrm{cont}}(s).
\end{equation}
If the non-atomic part is purely absolutely continuous and has total weight $1-Z_S$,
one may write
$d\rho_{S,\mathrm{cont}}(s)=(1-Z_S)\hat\rho(s)\,ds$, recovering the familiar normalized
continuum form. Here $\Sigma^{(0)}_\Psi(p;M^2)$ is the corresponding self-energy with a
free mediator of mass $M$.
\end{proposition}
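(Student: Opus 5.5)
The plan is to write the dressed mediator propagator as a superposition of free propagators and then push the spectral integral through the loop integral by linearity. First, from the Källén--Lehmann representation \eqref{eq:KL} applied to the mediator $S$, I will write
\[
\Delta_S(k^2)=\int_0^\infty \frac{d\rho_S(s)}{k^2-s+i\epsilon}
=\int_0^\infty \Delta^{(0)}(k^2;s)\,d\rho_S(s),
\]
where $\Delta^{(0)}(k^2;s)$ is the free propagator of mass $\sqrt{s}$, with the tensor or spinor structure factored out as in Section~\ref{sec:spectral}. Since $\Sigma_\Psi$ has a single internal $S$ line, the integrand is linear in $\Delta_S$. Concretely, $\Sigma_\Psi(p)=\int d^4k\,V(p,k)\,\Delta_\Psi^{\rm int}(p-k)\,\Delta_S(k^2)$, where $V$ collects the vertex factors and the remaining internal line.

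Second, I will interchange the $k$-integration with the $s$-integration, which is exactly the hypothesis stated in the proposition. This gives $\Sigma_\Psi(p)=\int \Sigma^{(0)}_\Psi(p;s)\,d\rho_S(s)$, where $\Sigma^{(0)}_\Psi(p;s)$ is the same loop evaluated with a free mediator of mass squared $s$. Third, I will substitute the decomposition $d\rho_S=Z_S\delta_{m_S^2}+d\rho_{S,\mathrm{cont}}$, following the Lebesgue decomposition \eqref{eq:lebesgue}. Integrating against the Dirac atom yields $Z_S\Sigma^{(0)}_\Psi(p;m_S^2)$, and the remainder is the integral in \eqref{eq:transfer}. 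The lower limit $s_0$ is taken to be the infimum of the support of the non-atomic part, that is, the continuum threshold. For the final remark, I will assume the non-atomic part is absolutely continuous, so that Radon--Nikodym gives a density. If in addition the field is canonically normalized, so that $\int d\rho_S=1$, the continuum carries total weight $1-Z_S$, and normalizing its density gives $\hat\rho$ with $\int\hat\rho=1$.

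The main obstacle is justifying the interchange, since both integrals are generically divergent or only conditionally convergent. I would handle this by working first with a regulated loop, for example with a Euclidean cutoff after Wick rotation or in dimensional regularization. In that setting I would check that $\int |\Sigma^{(0)}_{\Psi,\rm reg}(p;s)|\,d\rho_S(s)<\infty$, which uses the growth of the free self-energy in $s$ (at most polynomial or logarithmic) against the decay of the spectral measure, and then apply Fubini--Tonelli. Renormalization counterterms enter linearly and are absorbed consistently. Since the proposition explicitly conditions on the interchange being justified, I would state this verification as the content of that hypothesis rather than prove it in general.
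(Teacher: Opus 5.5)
Your proposal is correct and follows the same route as the paper. The paper gives no argument beyond the one built into the statement: write the dressed mediator line as a spectral superposition of free lines, interchange the spectral and loop integrations (which is taken as a hypothesis), and split the measure into its atom and its non-atomic part, with the continuum's total weight $1-Z_S$ likewise assumed rather than derived. Your regularization and Fubini--Tonelli discussion goes beyond what the paper attempts; if you keep it, note that Fubini requires absolute integrability of the joint integrand in $(k,s)$, which is stronger than the condition $\int|\Sigma^{(0)}_{\Psi,\mathrm{reg}}(p;s)|\,d\rho_S(s)<\infty$ that you propose to check.
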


\begin{corollary}\label{cor:transfer}
A change in the spectral measure of a mediator is transmitted to the self-energy of
every field coupled through the channel represented in \eqref{eq:transfer}. It may
therefore modify the external residue, but no degradation of that residue follows from
linearity alone. Whether the external atom is destroyed, reduced, shifted, or left
essentially intact depends on the coupling, tensor structure, renormalization conditions,
and the position of the external pole relative to the transferred continuum support.
\end{corollary}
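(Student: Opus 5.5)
The corollary follows from Proposition~\ref{prop:transfer} in two parts: a transmission statement, which is pure linearity, and a non-implication statement, which needs a counterexample for each way the residue can respond. For transmission, I will write the mediator measure as $d\rho_S = d\rho_S^{(1)} + \delta\rho_S$, where $\delta\rho_S$ is a signed measure with zero total variation constraint imposed only by normalization. Applying \eqref{eq:transfer} to both measures, whenever the interchange of spectral and loop integrals is justified for each, gives
\[
\delta\Sigma_\Psi(p) = \int \Sigma^{(0)}_\Psi(p;s)\, d(\delta\rho_S)(s).
\]
This is generically non-zero, because the kernel $s\mapsto\Sigma^{(0)}_\Psi(p;s)$ is not orthogonal to arbitrary signed measures. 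The external residue then inherits this shift through
\[
Z_\Psi^{-1} = 1 - \partial_{p^2}\Sigma_\Psi\big|_{p^2=m_\Psi^2},
\]
evaluated at the shifted pole. This establishes the first sentence: the change reaches every field coupled through the channel, and it can alter the residue.

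For the second part, I will show that linearity alone fixes neither the sign nor the size of the effect. The key observation is that \eqref{eq:transfer} is linear in $d\rho_S$ but the residue is a nonlinear functional of $\Sigma_\Psi$: it is determined by a pole condition and a derivative. Its response to $\delta\rho_S$ therefore depends on quantities outside the linear structure, so I will exhibit cases that realize each of the listed outcomes.

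\begin{itemize}
\item \emph{Left essentially intact.} If $m_\Psi^2$ lies below the threshold set by $s_0$ and the continuum, then $\operatorname{Im}\Sigma_\Psi$ vanishes near the pole. A smooth $\delta\rho_S$ with support far from the pole produces only real, renormalizable shifts, and on-shell renormalization conditions can absorb them entirely.
\item \emph{Shifted.} The same configuration with non-subtracted conditions moves $m_\Psi^2$ while keeping an isolated atom.
\item \emph{Reduced.} Transferring weight from $Z_S$ to a continuum that opens a threshold near $m_\Psi^2$ increases $|\partial_{p^2}\Sigma_\Psi|$ and lowers $Z_\Psi$.
\item \emph{Destroyed.} If the transferred support extends below $m_\Psi^2$ with non-zero on-shell coupling and phase space, $\operatorname{Im}\Sigma_\Psi(m_\Psi^2)\neq0$. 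The atom is then embedded in continuum and becomes a resonance, exactly as in the classification following Definition~\ref{def:classes}.
\end{itemize}

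The tensor-structure dependence enters through $\Sigma^{(0)}_\Psi$. For suitable spin or chirality structures, the relevant projection of the kernel can vanish or change sign, which makes the direction of the residue change coupling-dependent.

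The main obstacle is making the non-implication rigorous without committing to a particular model, since the corollary is partly a negative statement. I would handle it with a single Yukawa-type toy channel: a fermion $\Psi$ coupled to a scalar mediator $S$, whose one-loop $\Sigma^{(0)}_\Psi(p;s)$ is explicitly known. In that model I would vary only the support of $\delta\rho_S$ relative to $(m_\Psi - m_{\text{other}})^2$ and the renormalization point, and display each of the four cases. I would also check that the needed Fubini-type interchange holds there, for example when $d\rho_S$ has sufficient decay so that the subtracted integrals converge absolutely.
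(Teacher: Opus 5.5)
Your proposal is correct and follows essentially the paper's route. The paper gives no separate proof: it treats transmission as immediate from the linearity of \eqref{eq:transfer}, and it locates the outcome-dependence in the threshold dichotomy of Proposition~\ref{prop:threshold} together with that proposition's listed exceptions (tensorial, chiral or gauge cancellations, threshold suppression) and scheme choice, which is what your bullets reproduce, so your explicit Yukawa toy model adds rigor to the negative part but is not a required step. One correction: the channel opens when the induced threshold $(\sqrt{s}+m_{\text{other}})^2$ falls below $m_\Psi^2$, i.e.\ when $s$ drops below $(m_\Psi-m_{\text{other}})^2$ as you state in your last paragraph, not when the support in $s$ merely extends below $m_\Psi^2$ as your ``destroyed'' bullet says; also, normalization gives $\delta\rho_S$ zero total mass, not zero total variation.
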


\begin{proposition}[Threshold dichotomy with non-vanishing coupling hypotheses]\label{prop:threshold}
Let $m_A^2$ be the position of an isolated renormalized pole of an external field $A$
before the channel under consideration is opened, and assume that no independent
massless-infrared obstruction destroys the pole. If $m_A^2$ lies strictly below the
continuum support of every admissible channel contributing to its self-energy, then
\begin{equation}\label{eq:thresholdclosed}
\operatorname{Im}\Sigma_A(m_A^2)=0\,,
\end{equation}
this channel produces no on-shell absorptive part and does not by itself remove the
isolated spectral atom; after the usual field normalization its residue remains finite
and positive. If instead $m_A^2$ lies inside, or at the edge of, a kinematically accessible continuum support, and if the on-shell matrix element, the transferred spectral density, and the threshold phase-space factor are non-vanishing at the relevant point, then generically
\begin{equation}\label{eq:thresholdopen}
\operatorname{Im}\Sigma_A(m_A^2)\neq 0\,,
\end{equation}
and the isolated atom is unstable against conversion into a resonance or another non-atomic spectral structure \cite{Friedrichs}.
\end{proposition}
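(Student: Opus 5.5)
I would prove the two halves of Proposition~\ref{prop:threshold} separately, using the spectral transfer decomposition of Proposition~\ref{prop:transfer} to write each admissible channel contribution to $\Sigma_A$ as a superposition over mediator masses. The main tool is the dispersive (Cutkosky) structure of the one-loop self-energy: for each fixed internal mass configuration, $\operatorname{Im}\Sigma^{(0)}_A(p^2;s)$ is a phase-space integral of $|\mathcal M|^2$ supported on $p^2\ge s_{\mathrm{thr}}(s)$. By \eqref{eq:transfer} and positivity of $d\rho_S$, the absorptive part of $\Sigma_A$ is then supported on the union of these thresholds weighted by the mediator measure. I would define this union as the ``continuum support'' of the channel and assume it is closed.

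\emph{Closed channel.} Suppose $m_A^2$ lies strictly below the infimum $s_*$ of the continuum support of every channel. Then each integrand in \eqref{eq:transfer} has vanishing imaginary part at $p^2=m_A^2$, so \eqref{eq:thresholdclosed} holds. Moreover, $\Sigma_A$ is real-analytic in a neighbourhood of $m_A^2$, since the integrals converge uniformly once one is a finite distance below $s_*$. Hence the inverse propagator $p^2-m_A^2-\Sigma_A(p^2)$, after mass renormalization, has a real simple zero. Its residue is $Z_A=[1-\Sigma_A'(m_A^2)]^{-1}$. I would show $Z_A\in(0,1]$ by writing $\Sigma_A'$ through the subtracted dispersion relation as $\int d\rho/(s-m_A^2)^2$, which has the sign required by positivity. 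Equivalently, Källén--Lehmann positivity, \eqref{eq:KL}, forces the isolated atom's weight to be nonnegative, and the atom is nonzero because the zero is simple. The hypothesis excluding massless infrared obstructions is used here: it guarantees that $s_*>m_A^2$ strictly, with no continuum starting at the pole, so that $\Sigma_A'$ is finite.

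\emph{Open channel.} If $m_A^2$ lies inside the support, the optical theorem gives $\operatorname{Im}\Sigma_A(m_A^2)$ as a sum of nonnegative terms of the form $|\mathcal M|^2\times(\text{phase space})\times(\text{transferred density})$. Under the stated non-vanishing hypotheses, at least one term is strictly positive. This gives \eqref{eq:thresholdopen}; ``generically'' covers accidental cancellations between differently signed tensor structures, which I would exclude by stating positivity channel by channel. The inverse propagator then has no real zero near $m_A^2$, so the atom becomes a complex pole on the second sheet, i.e.\ a resonance (the Friedrichs mechanism). I would use the Friedrichs model explicitly as the rigorous prototype.

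The edge case is the main obstacle. At the threshold, the phase-space factor vanishes like a power of $(p^2-s_*)$. Whether the atom dissolves or survives as a threshold bound state then depends on the exponent and on the behaviour of $\Sigma_A'$. I would handle this by using the hypothesis ``non-vanishing at the relevant point'' literally, as one-sided non-vanishing of the combined density. Under that reading, $\Sigma_A'(m_A^2)$ diverges and $Z_A\to0$, so the atom is no longer isolated. I would flag that this step holds only under that assumption.
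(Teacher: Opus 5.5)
Your closed-channel and strictly-embedded arguments follow the paper's route, with positivity made explicit. That route is: the absorptive part is supported only on kinematically accessible intermediate states, the dispersive contribution to $\partial\Sigma_A/\partial p^2$ is finite below all thresholds, and an embedded eigenvalue is unstable in the Friedrichs sense. Two minor corrections there:
\begin{itemize}
\item $Z_A\le 1$ needs canonical normalization and a UV-convergent $\int\sigma(s)\,ds/(s-m_A^2)^2$, which generically fails in four dimensions. You should claim only finiteness and positivity, as the statement does.
\item The on-shell cut is a sum of squares. So ``generically'' cannot refer to cancellations between tensor structures; it refers to zeros of the amplitude, which the hypotheses already exclude.
\end{itemize}

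The genuine gap is the edge case, where your step gives the opposite of the claim. Set $\sigma(s)=-\frac{1}{\pi}\operatorname{Im}\Sigma_A(s+i0)\ge 0$ and $\Delta_A^{-1}(z)=z-m_0^2-\Sigma_A(z)$, so that $\Sigma_A(z)=\text{polynomial}+\int\sigma(s)\,ds/(z-s)$. Take your one-sided reading, $\sigma(s)\to\sigma_0>0$ as $s\to s_*^+$. Then $\Sigma_A$ itself diverges at threshold, not only $\Sigma_A'$:
\[
\Sigma_A(z)=-\sigma_0\log\frac{1}{s_*-z}+O(1)\longrightarrow-\infty
\qquad (z\to s_*^-).
\]
Three consequences follow.
\begin{itemize}
\item No finite mass counterterm can place the pole at $s_*$.
\item Below $s_*$ the inverse propagator is real and increasing, since positivity gives $-\Sigma_A'\ge 0$. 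At weak coupling it is negative a finite distance below the unperturbed pole, and it tends to $+\infty$ as $z\to s_*^-$. It therefore has a zero strictly below threshold, with finite positive residue.
\item The divergence of $\Sigma_A'$ only shows that no atom can sit \emph{at} $s_*$. The atom is repelled below the continuum and survives as an isolated atom.
\end{itemize}
This is the Friedrichs-model result that a form factor non-vanishing at the band edge always produces a bound state. The prototype you planned to invoke thus contradicts the conclusion you draw from $Z_A\to 0$.

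Loss of the atom at the edge happens in a different regime. For example, take $\sigma\simeq c\,(s-s_*)^{1/2}$ with the pole tuned exactly to threshold. Then $\Delta_A^{-1}\simeq (z-s_*)-\pi c\sqrt{s_*-z}$ has a branch point rather than a pole. But there the threshold phase-space factor vanishes, which the stated hypotheses exclude.

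So the edge half cannot be closed along your lines. The paper's own argument does not derive it either. It only demands non-vanishing factors, lists threshold suppression among the exceptions, rests the conclusion on ``generically'', and applies Friedrichs only to the embedded case. A sound version proves the second half for $m_A^2$ strictly inside the continuum support. At the edge, the outcome should be stated as depending on the threshold exponent of $\sigma$ and on the sign of the detuning.
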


\begin{proof}[Argument]
The absorptive part of \eqref{eq:transfer} at $p^2=m_A^2$ is supported where the intermediate states are kinematically accessible. Below all thresholds for the channels included in the proposition it vanishes, so
those channels do not generate a width. Under the stated regularity assumptions the
associated dispersive contribution to $\partial\Sigma_A/\partial p^2$ at the pole is
finite. At an exact threshold, accessibility alone is not sufficient: the on-shell matrix element, the transferred spectral density, and the threshold phase-space factor must be non-vanishing. Symmetry-protected zeros, tensorial, chiral, or gauge cancellations, and threshold suppression are explicit exceptions to the generic resonance conclusion. When these factors are nonzero, the embedded case is the field-theoretic form of the instability of embedded eigenvalues in the Friedrichs model \cite{Friedrichs}. \qedhere
\end{proof}

\section{Conditions for an asymmetric realized spectrum}\label{sec:split}

\subsection{No split under exact supersymmetry}

\begin{proposition}[Multiplet rigidity of the criterion]\label{prop:nosplit}
Suppose the supercharges $Q_\alpha$ exist as densely defined operators on the physical Hilbert space, annihilate the vacuum, $Q_\alpha|0\rangle=0$, and generate the $\mathcal{N}=1$ algebra. Let $|p,B\rangle$ be a one-particle state contributing an isolated atom to the spectral measure of a field $\phi$, with $Q_\alpha|p,B\rangle\neq 0$. Then $Q_\alpha|p,B\rangle$ is a one-particle state of the same four-momentum and mass, contributing an isolated atom at the same location to the spectral measure of the superpartner field interpolated by $[Q_\alpha,\phi]$. Consequently $\Lam$ takes the same value on the members of the multiplet.
\end{proposition}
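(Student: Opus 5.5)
The argument uses three ingredients: the supersymmetry algebra, Poincaré covariance of $Q_\alpha$, and the spectral decomposition of Definition~\ref{def:classes}.

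\textbf{Step 1: same momentum and mass.} Since $Q_\alpha$ is a spinor charge, the algebra gives $[P_\mu,Q_\alpha]=0$. Acting on $|p,B\rangle$ with $P_\mu$, and handling the domain question on a dense invariant core, yields
\[
P_\mu Q_\alpha|p,B\rangle = p_\mu Q_\alpha|p,B\rangle .
\]
Because $P^2$ commutes with $Q_\alpha$, the state $Q_\alpha|p,B\rangle$ lies in the same mass eigenspace $P^2=m^2$. The Lorentz generators rotate $Q_\alpha$ into $Q_\beta$. It follows that the span of $Q_\alpha|p,B\rangle$ over $\alpha$ and $p$ on the mass shell is a Poincaré-invariant subspace of the $P^2=m^2$ eigenspace.

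\textbf{Step 2: one-particle character and isolation.} We must exclude that $Q_\alpha|p,B\rangle$ has multi-particle components. The mass $m^2$ is isolated in the spectrum of $P^2$ restricted to the relevant superselection sector. The multi-particle continuum starts at a threshold $s_0>m^2$ whenever the atom is isolated. Applying the spectral projection $E_{P^2}(\{m^2\})$, which commutes with $Q_\alpha$, shows that $Q_\alpha|p,B\rangle$ lies in the discrete eigenspace at $m^2$. That eigenspace is by definition one-particle.

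Isolation also transfers to the partner sector. Because $[Q_\alpha,P^2]=0$, the operators $Q_\alpha$ and $\bar Q_{\dot\alpha}$ map spectral subspaces of $P^2$ into themselves. The anticommutator $\{Q,\bar Q\}=2\sigma^\mu P_\mu$ is invertible on states with $p\neq0$, so for $p\neq0$ these maps are injective between the bosonic and fermionic parts. Hence an isolated point of the spectrum in one sector is isolated in the other. The same argument shows that no fermionic continuum can reach down to $m^2$ unless a bosonic one does.

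\textbf{Step 3: the atom appears in the partner's spectral measure.} Vacuum invariance gives
\[
\langle 0|[Q_\alpha,\phi]|\cdot\rangle = -\langle 0|\phi\, Q_\alpha|\cdot\rangle .
\]
From this we compute the matrix element of the interpolating field $\chi:=[Q_\alpha,\phi]$ between the vacuum and the partner state. Inserting $\{Q,\bar Q\}=2\sigma\cdot P$ expresses $\langle 0|\chi\,\bar Q|p,B\rangle$ through $p_\mu\langle 0|\phi|p,B\rangle$, which is nonzero because the atom of $\phi$ has $Z>0$ and $p$ is timelike. The partner's residue is then a positive multiple of $Z$ by Hermitian positivity, so $\chi$ has an atom at $m^2$ with positive weight. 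The same argument run in reverse from $\chi$ to $\phi$ establishes symmetry.

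Stability of the asymptotic state follows from Step 1 together with isolation from Step 2. Therefore $\Lambda$ agrees on both members.

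\textbf{Main obstacle.} The hardest part is Step 2, together with the domain issues in Steps 1 and 3. Supercharges are generally only defined as forms or on local operators, so commuting $Q_\alpha$ with spectral projections of $P$ requires care. I would first restrict to the standard setting in which $Q_\alpha$ is closable and strongly commutes with translations, in the manner of Haag--{\L}opusza\'nski--Sohnius. I would then derive commutation with $E_{P^2}$ via the functional calculus and state this as the operative reading of the hypothesis.
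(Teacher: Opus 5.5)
Your Steps 1 and 2 follow the paper's route in substance: $[Q_\alpha,P_\mu]=0$ preserves four-momentum, and the discrete mass-$m$ subspace carries a representation of the unbroken algebra. Step 3, however, has a genuine gap. You try to \emph{derive} the partner's atom from $Z>0$. The paper's argument does not do this: it takes non-zero overlap of $[Q_\alpha,\phi]$ with the partner state as an explicit proviso (restated in the Remark) and claims nothing about residues.

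Your computation drops a term. Using $\langle 0|Q_\alpha=\langle 0|\bar Q_{\dot\beta}=0$,
\[
\langle 0|[Q_\alpha,\phi]\,\bar Q_{\dot\beta}|p,B\rangle
=-2\sigma^\mu_{\alpha\dot\beta}\,p_\mu\,\langle 0|\phi|p,B\rangle
+\langle 0|[\phi,\bar Q_{\dot\beta}]\,Q_\alpha|p,B\rangle ,
\]
and the second term does not vanish in general. Likewise, positivity only gives $\|Q_\alpha v\|^2=2\langle v,(\sigma^\mu P_\mu)_{\alpha\dot\alpha}v\rangle-\|\bar Q_{\dot\alpha}v\|^2$. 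That is an upper bound proportional to $Z$, not a positive multiple of it.

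Here is a concrete counterexample. In the free massive Wess--Zumino model, choose conventions with $[\bar Q_{\dot\beta},\phi]=0$ and take the field to be $\phi^\dagger$, so $[Q_\alpha,\phi^\dagger]=0$ identically. The one-particle states $|p,B\rangle$ created by $\phi$ have $\langle 0|\phi^\dagger|p,B\rangle\neq0$ with $Z=1$, timelike $p$, and $Q_\alpha|p,B\rangle\neq0$. Yet the would-be partner field is zero. In this case $\bar Q_{\dot\beta}|p,B\rangle=0$, and the dropped term cancels the $\sigma\cdot p$ term exactly. So $Z>0$ plus the algebra cannot produce the partner atom; its non-vanishing must be an input.

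The repair is to work with created vectors, which is what the statement's pairing of $[Q_\alpha,\phi]$ with $Q_\alpha|p,B\rangle$ presupposes. Vacuum invariance gives $[Q_\alpha,\phi](f)|0\rangle=Q_\alpha\phi(f)|0\rangle$. Since $Q_\alpha$ commutes with the spectral projections $E(\Delta)$ of $P_\mu$, this becomes $E(\Delta)[Q_\alpha,\phi](f)|0\rangle=Q_\alpha E(\Delta)\phi(f)|0\rangle$. Two things follow at once:
\begin{itemize}
\item the partner's spectral measure is supported inside the support of $d\rho_\phi$, so isolation is inherited;
\item the atom at $m^2$ survives exactly when $Q_\alpha$ does not annihilate the one-particle component of $\phi(f)|0\rangle$ for some $f$.
\end{itemize}
The second condition is a hypothesis: it is the content of $Q_\alpha|p,B\rangle\neq0$ together with the paper's overlap proviso, not a consequence of $Z>0$.

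The same identity also makes your Step 2 machinery unnecessary. That helps, because that step has problems of its own:
\begin{itemize}
\item It assumes a gap in the spectrum of $P^2$ on the whole sector, which is stronger than an isolated atom of $d\rho_\phi$.
\item It treats $Q_\alpha$ as injective, although $Q_\alpha^2=0$. The relation $\{Q,\bar Q\}=2\sigma\cdot P$ only prevents $Q_\alpha$ and $\bar Q_{\dot\alpha}$ from both annihilating a state with $(\sigma\cdot p)_{\alpha\dot\alpha}>0$.
\item It treats $\sigma\cdot p$ as invertible for $p\neq0$, but $\sigma\cdot p$ is singular for every lightlike $p$.
\end{itemize}
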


\begin{proof}[Argument]
Because $[Q_\alpha,P_\mu]=0$, acting with a conserved supercharge on a physical
one-particle state preserves its four-momentum. Under the hypotheses of the proposition,
the one-particle subspace therefore carries a representation of the unbroken
supersymmetry algebra. If $Q_\alpha|p,B\rangle\neq0$, the resulting state belongs to the
same mass multiplet. Provided the superpartner interpolating operator
$[Q_\alpha,\phi]$ has non-zero overlap with that state, its two-point function contains a
one-particle contribution at the same mass. Thus the existence of the isolated
one-particle contribution is shared by the non-trivially related members of the
multiplet. The proposition does not assert equality of field-normalization residues.
\qedhere
\end{proof}

\begin{remark}
The argument assumes the existence of the supercharges on the physical state space and the invariance of the vacuum; it is subject to the usual caveats concerning massless multiplets, Goldstone sectors, and gauge artifacts. The conclusion also presupposes non-zero overlap of the chosen interpolating operators
with the relevant physical one-particle states. These qualifications delimit the
proposition's domain of validity.
\end{remark}

\begin{corollary}[Necessity of supersymmetry breaking]\label{cor:breaking}
Any asymmetry of realized particlehood between superpartners requires the failure of at least one hypothesis of Proposition~\ref{prop:nosplit}: spontaneous breaking ($Q_\alpha|0\rangle\neq 0$), explicit or soft breaking, or the absence of the conventional linearly realized supercharges in the class of
de~Sitter realizations specified in Proposition~\ref{prop:order}. In that setting the
hypotheses of Proposition~\ref{prop:nosplit} are not available, so a split is no longer
excluded by multiplet rigidity. This establishes only structural possibility: whether a
split occurs, and with what magnitude, remains a dynamical and spectral question.
\end{corollary}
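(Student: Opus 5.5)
The plan is to obtain Corollary~\ref{cor:breaking} as the contrapositive of Proposition~\ref{prop:nosplit}. Suppose two superpartners are asymmetric in realized particlehood, say $\Lam[\phi]=1$ and $\Lam[\phi']=0$, where $\phi'$ is interpolated by $[Q_\alpha,\phi]$. I will argue that the full hypothesis set of Proposition~\ref{prop:nosplit} cannot hold. That set consists of: (a) the supercharges exist as densely defined operators on the physical Hilbert space and generate the $\mathcal N=1$ algebra; (b) $Q_\alpha|0\rangle=0$; (c) $Q_\alpha|p,B\rangle\neq0$; and (d) the interpolating field has nonzero overlap with the resulting one-particle state, as stated in the accompanying remark.

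If all four held, the proposition would give an isolated atom at the same location $m^2$ in $d\rho_{\phi'}$. The next step is to check that this atom has $Z>0$ and defines a stable asymptotic state. Stability transfers because $[Q_\alpha,P_\mu]=0$: the superpartner state has the same momentum, and isolation of the atom is a statement about the support of the one-particle spectrum in the multiplet. Positivity of $Z$ then follows from (d). Hence $\Lam[\phi']=1$, which contradicts the assumed asymmetry.

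It remains to sort the failed hypotheses into the three listed classes. Failure of (b) is spontaneous breaking. Failure of the statement that $Q_\alpha$ generates a symmetry commuting with the Hamiltonian, i.e.\ non-conservation, is explicit or soft breaking. Failure of (a), meaning no linearly realized, positive-norm supercharges on the physical space, is exactly the de~Sitter situation established in Proposition~\ref{prop:order}. That proposition is invoked rather than reproved, since under its stated unitarity and positivity requirements $\FA[\,\cdot\,;\FG^{\mathrm{dS}}]$ contains no linearly realized supersymmetry.

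Failures of (c) or (d) are not breaking. They are the massless, Goldstone, and gauge-artifact caveats of the remark, or a poor choice of interpolating operator, and I will set them aside explicitly as outside the domain where ``superpartner'' is defined via $[Q_\alpha,\phi]$. This is the delicate step. The listed alternatives are exhaustive only modulo those caveats, and I would phrase the corollary's ``at least one hypothesis'' so that it includes them.

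Finally, the ``only structural possibility'' clause is immediate. Removing a hypothesis removes the rigidity conclusion but does not supply a mechanism for the split. By Proposition~\ref{prop:threshold}, whether an atom is actually destroyed depends on thresholds and on non-vanishing couplings, which is dynamical data. So no existence claim is made, and none needs proof.
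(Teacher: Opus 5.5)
Your proposal is correct and follows the paper's reasoning: the corollary is the contrapositive of Proposition~\ref{prop:nosplit}, the de~Sitter alternative is imported from Proposition~\ref{prop:order}, and the ``structural possibility only'' clause holds because removing a hypothesis removes the rigidity conclusion without supplying a mechanism. You also make explicit that failure of $Q_\alpha|p,B\rangle\neq0$ or of the overlap condition is not supersymmetry breaking, so the listed alternatives are exhaustive only modulo the Remark's caveats, which the paper leaves implicit. Your side remark that isolation is a property of ``the one-particle spectrum in the multiplet'' is loose, since isolation concerns the continuum support of the partner field's spectral measure, but you do not need it because Proposition~\ref{prop:nosplit} already asserts an isolated atom and equal $\Lam$.
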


Corollary~\ref{cor:breaking} closes the logical arc of the filtration: the geometry/vacuum sieve determines the realization of supersymmetry; the effective algebra inherited from it fixes which multiplet rigidities survive; kinematics and dynamics then decide the outcome field by field. At no stage is a distinction between ordinary particles and superpartners assumed as such: every discriminating structure enters through supersymmetry-breaking data---mass splittings, couplings, thresholds---processed by the filtration.

\subsection{Necessary ingredients for a two-sector split}

For any concrete realization in which a coupled theory separates into a spectrally stable sector and a spectrally suppressed sector, the following four ingredients must be established by explicit calculation; none of them follows from the filtration alone.

\begin{enumerate}
\item \emph{Seed asymmetry.} At least one class of fields must undergo direct degradation with strength differing across the two sectors, through couplings and masses rather than through sector labels.
\item \emph{Transfer network asymmetry.} The admissible interaction channels must propagate the degradation preferentially within one sector, with the reciprocal effects on the other sector computed in the same approximation.
\item \emph{Kinematic asymmetry.} The would-be poles of the suppressed sector must lie in or near degraded continuum support, while the poles of the stable sector remain below the damaging thresholds (Proposition~\ref{prop:threshold}).
\item \emph{Persistence.} If degradation is generated in a high-curvature epoch, a mechanism must determine whether an isolated atom can re-form as the curvature decreases.
\end{enumerate}

\subsection{Persistence as a separate requirement}

If a spectral atom is removed in a high-curvature or otherwise non-stationary regime,
its absence at later times does not follow automatically. A constructive model must
therefore determine whether the evolving spectral measure permits the re-formation
of an isolated atomic component when the background approaches a lower-curvature
regime. We use \emph{spectral persistence} for the question of whether the non-atomic
spectral status produced in one regime is maintained under subsequent background
evolution, rather than an isolated atom being re-formed. This is a dynamical condition
to be tested in the companion calculation, not a consequence of the structural
filtration itself.
\section{Relationship with known mechanisms}\label{sec:known}

The localization framework discussed here is distinct from Higgs mass generation, from decoupling by large mass scales, and from split-supersymmetry scenarios: those mechanisms operate on masses and couplings of asymptotic particles, whereas the present framework concerns whether an asymptotic particle exists at all. It is conceptually related to, but distinct from, infraparticle dressing and infrared decoherence \cite{Buchholz1986,Buchholz1982,Kibble1968,KulishFaddeev,Froehlich1974,BreuerPetruccione,CalzettaHu}: the filtration organizes such effects into ordered admissibility and realization stages. Finally, through Proposition~\ref{prop:order} the framework connects to non-linear realizations of supersymmetry \cite{VolkovAkulov,BFKVP}: the constrained-multiplet structure of de~Sitter supergravity is, in the present language, the output of the second sieve evaluated on a de~Sitter first sieve.

\section{Discussion and limitations}\label{sec:discussion}

The revised framework is deliberately narrower than the earlier formulation. It does not
claim that geometry alone converts superpartners into a non-asymptotic or dark sector.
Geometry and the choice of state determine the spectral environment and the realization
of symmetries; interactions, thresholds, and self-energies decide the fate of individual
poles. Any distinction between Standard-Model particles and superpartners must therefore
be derived from masses, couplings, available channels, and supersymmetry-breaking data,
rather than imposed through sector labels.

The framework also avoids treating a reduction of the residue as equivalent to the
destruction of a particle. A finite $0<Z<1$ is compatible with an isolated atom. A
non-zero on-shell absorptive part normally signals an unstable resonance, while the
loss of a stable asymptotic particle state, in the sense used here, requires the absence
of an isolated atomic component in the relevant physical spectral measure. In genuinely time-dependent backgrounds this language is
local or adiabatic and state-dependent; a global K\"all\'en--Lehmann representation need
not exist.

The stochastic background model retained in Section~\ref{sec:channels} is illustrative.
Ensemble dephasing of an averaged correlator is not automatically an intrinsic imaginary
self-energy. A constructive realization requires an in-in or open-system calculation with
the environmental degrees of freedom treated dynamically. Likewise, the spectral-transfer
decomposition identifies how a dressed internal line enters a self-energy, but it does not
by itself prove that the external pole disappears.

The division of labour with the companion paper currently in preparation
\cite{Paper3} is therefore precise. The present work
establishes the structural ordering, the measure-theoretic meaning of the localization
criterion, the channel-specific status of fermionic protection, and the necessity of
supersymmetry breaking for a split within a multiplet. The companion work must provide
paired calculations for ordinary particles and superpartners in the same background, state,
renormalization scheme, and approximation, including reciprocal cross-sector effects and
the evolution of the spectral measures as curvature decreases.

These qualifications are not ancillary. They are what makes the revised criterion
compatible with the correction from arXiv:2601.12537 v1 to the exterior-polynomial
classification: in both papers, a restricted classification result is kept distinct from
the dynamics and from the final physical realization.

\section{Conclusions}\label{sec:conclusions}

The existence of a field in a Lagrangian, its membership in a supersymmetric algebra, and
its realization as an asymptotic particle are distinct statements. The present paper
organizes them through an ordered structural criterion. Geometry and the quantum state
define the spectral arena; the effective algebra fixes the physically realized symmetries
and channels; dynamics determines the dressed two-point function; and spectral analysis
decides whether an isolated atom with non-zero weight survives.

This architecture is congruent with the revised gravitino mass-bilinear classification.
The revised arXiv:2601.12537 manuscript supersedes the reduced Berezin-projection
mechanism of v1. It distinguishes the unrestricted vector--spinor algebra from the
restricted exterior-polynomial first-order sector, separates structural classification from
dynamical and supersymmetric selection, and treats superspace projection as extraction of a
component already contained in a completed superform. The present work applies the same
discipline to particle states.

The localization criterion $\Lam[\Psi]$ is therefore not an additional equation of motion
and not a claim that geometry alone removes particles. It is a compact measure-theoretic
diagnostic: a particle corresponds to an isolated physical spectral atom. The earlier
fermionic protection is restricted to a leading direct channel and does not survive as a
universal theorem. Interaction-mediated transfer may affect fermions and bosons alike, and
an asymmetric realized spectrum requires supersymmetry breaking together with concrete
kinematic and dynamical differences.

The framework supplies necessary conditions, not a constructive suppression mechanism.
Its value lies in separating admissibility, selection, and realization, thereby preventing
formal supersymmetric field content from being identified automatically with an observable
asymptotic particle spectrum.

\end{document}